\def\xhat{\hat x}
\def\That{\widehat T}
\def\numCP{P}
\def\CPlength{\lambda}
\def\numT{C}
\let\rho=\varrho
\let\epsilon=\varepsilon
\def\hide#1{}
\def\calM{{\mathcal{M}}}
\newtheorem{theorem}{Theorem}
\newtheorem{lemma}[theorem]{Lemma}
\newtheorem{observation}[theorem]{Observation}
\newtheorem{proposition}[theorem]{Proposition}
\begin{document}
\title{The Mixing Time of Glauber Dynamics
	for Colouring Regular Trees\footnote{
Partly funded by the EPSRC
grants   EP/E062482/1 and EP/E064906/1
}}
\author{Leslie Ann Goldberg\footnote{
Department of Computer Science, University of Liverpool, Liverpool L69 3BX, UK
}, 
Mark Jerrum\footnote{School of Mathematical Sciences, Queen Mary, University of London, Mile End Road, London E1 4NS, UK} 
and Marek Karpinski\footnote{
Department of Computer Science,
University of Bonn,
R\"omerstr. 164,
53117 Bonn, Germany
}}
\date{\today}
\maketitle

\begin{abstract}
We consider Metropolis Glauber dynamics for sampling
proper $q$-colourings of the $n$-vertex complete $b$-ary tree  
when $3\leq q\leq b/2\ln(b)$.
We give both upper and lower bounds
on the mixing time. 
For fixed $q$ and $b$, our upper bound is $n^{O(b/\log b)}$ 
and our lower bound is $n^{\Omega(b/q \log(b))}$, where 
the constants implicit in the $O()$ and $\Omega()$ notation do not depend upon~$n$, $q$ or~$b$.
\end{abstract}

\section{Introduction}

This paper proves both upper and lower bounds
on the mixing time of Glauber dynamics for colourings
of regular trees. It answers in particular the 
question of Hayes, Vera and Vigoda~\cite{HVV}, asking whether the mixing time
of Glauber dynamics
is super-polynomial
for the complete $b$-ary tree with $q=3$ and $b=O(1)$.
We show that the mixing time is not super-polynomial --- it
is $n^{\Theta( b/\log( b))}$.

More generally, we consider Metropolis Glauber dynamics for sampling
proper $q$-colourings of the $n$-vertex complete $b$-ary tree  
when $3\leq q\leq b/2\ln(b)$.
We give both upper and lower bounds on
the mixing time, pinning down the dependance of the mixing time 
on~$n$, $b$ and $q$.
For fixed $q$ and $b$, our upper bound is $n^{O(b/\log b)}$ 
and our lower bound is $n^{\Omega(b/q \log(b))}$, where 
the constants implicit in the $O()$ and $\Omega()$ notation do not depend upon~$n$, $q$ or~$b$.
  
\section{Previous work}

There has been quite a bit of work on Markov chains for sampling the proper 
$q$-colourings of an input graph. 
Much of this work focusses on \emph{Glauber dynamics}, 
which is a general term for a Markov chain which updates 
the colour of one   vertex at
a time.  
Proper colourings correspond
to configurations in the \emph{zero-temperature Potts model} from statistical
physics, and there
is a close connection between the mixing time of Glauber dynamics
and the qualitative properties of the model. In particular,
rapid mixing, specifically $O(n\log(n))$ mixing for an $n$-vertex 
sub-graph of an infinite graph,
often coincides with the case in which the
model has a unique infinite-volume Gibbs measure on the 
infinite graph. 
See Weitz's PhD thesis 
\cite{1048642}
and Martinelli's lecture notes
\cite{martinelli97lectures}
for an exposition of this material.

Martinell, Sinclair and Weitz~\cite{MSW} 
consider Glauber dynamics on the complete $n$-vertex
tree with branching factor~$b$.
They show that for $q\geq b+3$, Glauber dynamics 
for sampling proper $q$-colourings
mixes in $O(n \log(n))$
time for arbitrary boundary conditions.
This result is optimal in the sense that for $q\leq b+2$ there
are boundary conditions for which Glauber dynamics is not even 
ergodic.

It is also interesting to determine whether Glauber dynamics
is rapidly mixing for smaller $q$ in the absence
of boundary conditions.
Hayes, Vera, and Vigoda~\cite{HVV} showed that
there is a $C>0$ such that for all $q > C (b+1)/\log(b+1)$,
Glauber dynamics mixes in polynomial time.
In fact, their result applies to all planar graphs with maximum degree~$b+1$.
They ask in Section~6 whether
the mixing time is super-polynomial for the complete $b$-ary tree with $q=3$ and
$b=O(1)$.
As noted above, we show give upper and lower bounds
showing that the mixing time is polynomial in this case.

As noted in \cite{HVV}, the only previous rapid mixing results for
$q<b+1$ 
were for $3$-colourings of finite subregions of the 2-dimensional
integer lattice~\cite{GMP, LRS} and random graphs~\cite{1182463}.
 
\cite{BVV}
considers \emph{reconstruction} on the complete tree with branching factor~$b$.
They show that for $C=2$ and $q>C (b+1) /\ln(b+1)$
non-reconstruction holds, meaning that,
over random colourings of the leaves, the expected influence on the
root is vanishing.
It is known~\cite{mosselsurvey} that the expected
influence is non-vanishing for a sufficiently large~$q$ satisfying
$q\leq (1-\epsilon) (b+1)/\ln(b+1)$ for
some $\epsilon>0$. This non-vanishing influence implies \cite{BKMP,BVV}
that the mixing time of Glauber dynamics exceeds $O(n \log(n))$.
 
In Theorems~\ref{thm:upper} and~\ref{thm:lower}, we give upper and lower bounds for the mixing 
time for fixed $q$ and $b$
when   $3\leq q\leq b/2\ln(b)$.
Our upper bound is $n^{O(b/\log b)}$ 
and our lower bound is $n^{\Omega(b/q \log(b))}$.

\section{Proof techniques}

The upper bound argument is based on canonical paths.
The lower bound argument is based on conductance.
Essentially, the argument is that it takes a while to move from a colouring
in which the colour of the root is forced to be one colour by
the induced colouring on the leaves
to a colouring in which the colour of the root is forced to be another colour.
This is similar to the \emph{recursive majority} idea  
\cite{BKMP} used to prove a lower bound for the Ising model.

\section{The problem}
 
Fix $b\geq 2$ and fix $q\geq 3$.
Let $[q]=\{0,\ldots,q-1\}$.
Suppose $T$ is a complete $b$-ary tree of height~$H$
--- meaning that there are $H$~edges on a path from the root $r(T)$
to any leaf. 
Let $V$ be the set of vertices of $T$ and $n=|V|$.
Let $L$ be the set of leaves of $T$.
Note that
$$n=\frac{b^{H+1}-1}{b-1}$$ so
\begin{equation}
\label{eq:H}
H = \frac{\log((b-1)n+1)}{\log(b)}-1.
\end{equation}

The \emph{height} $h(v)$ of a vertex $v\in V$ is the number of edges
on a path from~$v$ down to a leaf.
So a vertex $v\in L$ has $h(v)=0$
and $h(r(T))=H$.

For any vertex $v$ of~$T$, $T_v$ denotes the subtree of~$T$ rooted at~$v$.
For any subtree $T_v$, let $V(T_v)$ be
the set of vertices of $T_v$ and let $L(T_v)$ be  the set of leaves.
A \emph{proper $q$-colouring} of $T_v$ is a labelling of the vertices with elements of~$[q]$
such that neighbouring vertices receive different colours.
Let $\Omega(T_v)$ be the set of proper $q$-colourings of $T_v$ and
$\Omega=\Omega(T_{r(T)})$ be the set of proper $q$-colourings of $T$.

For a colouring $x\in\Omega$, let $x(T_v)$ denote
the restriction of $x$ to the vertices in the subtree $T_v$.
Similarly,
for a set $U\subseteq V(T_v)$
and a colouring $x\in \Omega(T_v)$, 
$x(U)$ denotes the restriction of~$x$ to~$U$.

Let $\calM$ be the Metropolis Glauber dynamics for sampling from~$\Omega$.
To move from one colouring to another, this chain selects a vertex $v$ and a colour $c$
uniformly at random. The vertex $v$ is re-coloured with $c$ if and only if this results in a
proper colouring.
If $q\geq 3$ then the set of proper colourings is connected and $\calM$
converges to the uniform distribution on $\Omega$, which we call $\pi$.
The goal is to study the mixing time of $\calM$ as a function of~$n$, $b$
and $q$. Let $P$ be the transition matrix of~$\calM$.
The {\it variation distance} between distributions~$\theta_1$
and~$\theta_2$ on~$\Omega$ is
$$||\theta_1 - \theta_2|| = \frac12 \sum_i |\theta_1(i)-\theta_2(i)|
= \max_{A\subseteq \Omega} |\theta_1(A)-\theta_2(A)|.$$
For a state~$x\in \Omega$, the mixing time of $\calM$ from
starting state~$x$ is
$$\tau_x(\calM,\delta) =
\min\big\{t>0:
||P^{t'}(x,\cdot)-\pi(\cdot)||
\leq \delta \mbox{ for all $t'\geq t$}\big\}.$$
The mixing time of $\calM$ is given by
$$\tau(\calM,\delta) = \max_x
\tau_x(\calM,\delta).$$

Our results are as follows, where $\lg$ denotes the base-$2$
logarithm and $\ln$ denotes the natural logarithm.

\begin{theorem}
\label{thm:upper}
Suppose $q\geq 3$. 
Let $\calM$ be the Metropolis Glauber dynamics for sampling 
proper $q$-colourings of the $n$-vertex complete $b$-ary tree.
Then for fixed $q$ and $b$
the mixing time
$\tau(\calM,1/(2 e))$ is
$n^{O(b/\log(b))}$ where the constant implicit
in the $O()$ notation does not depend upon $n$, $q$ or $b$.
In particular,
$$\tau(\calM,1/(2e)) \leq 3 b q^2 (1+\lg(n)) n^{3+\frac{3 b}{\ln(b)}}.$$
\end{theorem}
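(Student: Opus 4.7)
The plan is to apply the canonical paths (flow) method of Jerrum and Sinclair: for each ordered pair $(x,y)\in\Omega\times\Omega$ of proper colourings, I would define a path $\gamma_{xy}$ of legal Metropolis transitions from $x$ to $y$ and bound the resulting congestion
$$\rho = \max_{e=(u,u')}\ \frac{1}{\pi(u)\,P(u,u')}\sum_{(x,y)\,:\, e\in\gamma_{xy}}\pi(x)\,\pi(y)\,|\gamma_{xy}|.$$
Because $\pi$ is uniform on $\Omega$ and each legal Metropolis transition has probability $1/(nq)$, $\rho$ collapses to $(nq/|\Omega|)\cdot\max_e\sum_{\gamma_{xy}\ni e}|\gamma_{xy}|$, so the task reduces to bounding, for each edge $e$, both the number of pairs $(x,y)$ whose canonical path uses $e$ and the length of $\gamma_{xy}$. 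The mixing time then follows from the standard bound $\tau(\calM,\delta)\le \rho\bigl(\ln|\Omega|+\ln(1/\delta)\bigr)$ together with $\ln|\Omega|\le n\ln q$.

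The paths are built by a recursive \emph{morphing} procedure on the tree. To transform $x(T_v)$ into $y(T_v)$, I would first recursively morph each child's subtree, in a fixed canonical order, and only then update $v$ itself. At every intermediate state along $\gamma_{xy}$ the current colouring agrees with $y$ on a ``processed'' set of subtrees and with $x$ elsewhere, the two parts being separated by a frontier that climbs up the tree. The main obstacle is that with $q$ as small as $3$, it may be impossible in a single Metropolis step to take a vertex $w$ directly from $x(w)$ to $y(w)$: both colours may be forbidden by the mixed neighbourhood in which the parent of $w$ still carries its $x$-colour while some children of $w$ already carry their $y$-colours. To overcome this I would introduce a staged local update in which $w$ is first moved through an intermediate helper colour, possibly after first reshuffling a small set of its children out of the way. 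The careful part is designing this staging so that it is always feasible when $q\ge 3$, and showing that the total length of $\gamma_{xy}$ stays polynomial in $n$, $b$, and $q$.

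To bound the congestion I would use Sinclair's encoding trick. Given an edge $e$ on the canonical path of $(x,y)$, I define $\eta_e(x,y)\in\Omega$ by pasting $y$ on the processed part of the tree and $x$ on the unprocessed part, with a small local patch near the current working vertex $w$ to keep the colouring proper. The map $(x,y)\mapsto\eta_e(x,y)$ is at most $N$-to-one, where the auxiliary data needed to recover $(x,y)$ consists of the identity of $w$ (at most $n$ choices) and, for each of the $H=\Theta(\log n/\log b)$ ancestor levels above $w$, the subset of children already processed together with the current phase of the local staging (at most $2^{O(b)}$ choices per level). This yields $N\le n\cdot 2^{O(bH)}=n^{O(b/\log b)}$, and hence $\rho = n^{O(b/\log b)}$, giving the claimed mixing-time bound. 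The most delicate part of the argument is verifying that the encoding is genuinely invertible from this auxiliary data --- equivalently, that the per-vertex staging scheme is locally reversible given only a constant-sized phase index per ancestor.
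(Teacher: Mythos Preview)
Your framework---canonical paths, congestion bound, encoding---matches the paper's, but the actual path construction has a gap that the paper's key idea is designed to fill.

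You propose to morph $T_v$ by \emph{first} recursively morphing each child's subtree fully to its target $y$-colouring, and \emph{then} updating $v$. The problem occurs one step earlier than the one you flag: when the recursive call on a child $w$ finishes, $w$ carries colour $y(w)$, while its parent $v$ still carries $x(v)$. Nothing forbids $y(w)=x(v)$, so the intermediate state need not be a proper colouring at all; the path leaves $\Omega$ before you ever reach the step of updating $w$ or $v$. Your ``helper colour / reshuffle children of $w$'' fix addresses only the later step of recolouring $w$ itself and does not repair this parent--child conflict. Simple interleavings (process the ``safe'' children first, then $v$, then the rest) run into the symmetric obstruction $x(\text{child})=y(v)$.

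The paper resolves this with a global colour-shift primitive $\mathit{Cycle}^{\pm}$ that adds $\pm 1\pmod q$ to every vertex of a subtree; being a permutation of colours, it preserves propriety throughout. The $\mathit{Recolour}$ procedure for $T_v$ is then: (i) apply $\mathit{Cycle}^{\pm}$ (or neither) to each child's subtree so that every child's root avoids both $x(v)$ and $y(v)$---possible since $q\ge 3$; (ii) recolour $v$ to $y(v)$; (iii) now recurse on the children. The congestion analysis is a direct recursion on the number of consistent pairs $(x(\That),y(\That))$ rather than a single encoding element of~$\Omega$: the three cycle choices per sibling subtree contribute a factor $3^b$ per level, and the cycle primitive itself contributes at most $2^{bh}$ consistent initial states, giving $\chi(h)\le 9^{bh}$ overall.

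Two smaller points. First, your encoding $\eta_e(x,y)$ as stated---$y$ on the processed part, $x$ on the unprocessed part---is essentially the current state on the path and carries no new information; the standard trick needs the complement. Second, your auxiliary data ``which siblings are processed'' ($2^b$ per level) is not the right bookkeeping once the construction is fixed: what must be recorded is which of the three cycle options was applied to each sibling subtree, plus whether the parent has been recoloured yet.
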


\begin{theorem}
\label{thm:lower}
Suppose $3\leq q\leq b/2\ln(b)$.
Let $\calM$ be the Metropolis Glauber dynamics for sampling 
proper $q$-colourings of the $n$-vertex complete $b$-ary tree.
Then 
for fixed $q$ and $b$
the mixing time
$\tau(\calM,1/(2 e))$ is
$n^{\Omega(b/q\log(b))}$ where the constant implicit
in the $\Omega()$ notation does not depend upon $n$, $q$ or $b$.
In particular,
$$\tau(\calM,1/(2e)) \geq
\left(\frac12 - \frac{1}{2e}
\right)\frac{2}{9}
n^{\frac{b-2}{6(q-1)\ln(b)}}
.$$ 
\end{theorem}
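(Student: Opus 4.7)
The plan is to apply the conductance method to a bottleneck set that records whether the leaves force a particular colour at the root, following the recursive-majority template of~\cite{BKMP}. For $x\in\Omega$ and $v\in V$, define recursively $A_v(x)\subseteq[q]$ to be the set of colours the restriction $x|_{L(T_v)}$ can extend to at~$v$: set $A_\ell(x)=\{x(\ell)\}$ for each leaf~$\ell$, and for internal~$v$ with children $u_1,\ldots,u_b$ set $A_v(x)=[q]\setminus\{c:\exists i,\ A_{u_i}(x)=\{c\}\}$. Call $v$ \emph{forced to}~$c$ when $A_v(x)=\{c\}$. The bottleneck set is $S=\{x\in\Omega:A_{r(T)}(x)=\{0\}\}$; note that $S$ depends on $x$ only through the leaf colouring $x|_L$.

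Let $p_h$ be the probability, under the uniform proper colouring of a height-$h$ subtree, that the root is forced. The tree's Markov-random-field structure (children's subtrees are conditionally independent given the parent's colour) combined with colour symmetry gives $\pi(S)=p_H/q$. The height-$h$ root is forced exactly when, for every non-root colour, at least one of the $b$ children is forced to that colour, so $p_h$ satisfies (up to boundary corrections from the parent's colour) a recursion of the form $p_h\ge 1-q(1-p_{h-1}/q)^{b}$. In the regime $q\le b/(2\ln b)$, provided $p_{h-1}\ge p_*$ for a small universal constant $p_*>0$, the term $(1-p_{h-1}/q)^{b}\le\exp(-p_*b/q)\le b^{-2p_*}$ is small enough to maintain $p_h\ge p_*$ by induction, starting from $p_0=1$ at the leaves. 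Consequently $\pi(S)=\Omega(1/q)$, while $\pi(S)\le 1/q<1/2$.

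Because $S$ depends only on the leaf colouring, any Glauber move from $x\in S$ out of $S$ must recolour some leaf $v$, and the change must propagate along the path $v=v_0,v_1,\ldots,v_H=r(T)$ so as to alter $A_{r(T)}$. Propagation at $v_i$ requires $v_i$ to be \emph{sensitive}: $v_{i-1}$ must be the unique child of $v_i$ carrying some particular forced colour, either before or after the move. Given per-child forcing probability $p\ge p_*$ with the forced colour uniform over the $q-1$ colours distinct from $v_i$'s colour, the probability that a specified child is the unique child of $v_i$ forced to a given colour is at most $(p/(q-1))(1-p/(q-1))^{b-1}$; summing over the $q-1$ possible colours yields a per-level sensitivity of at most $p\exp(-p(b-1)/(q-1))=\exp(-\Omega(b/q))$. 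Chaining over the $H$ levels of the path, and using the tree's conditional independence structure to avoid compounding dependencies, the probability that a leaf recolouring propagates to the root is at most $\exp(-\Omega(Hb/q))=n^{-\Omega(b/(q\log b))}$, where I have used $H\log b=\Theta(\log n)$. Combining with $P(x,y)=1/(nq)$ and a union bound over the $|L|$ choices of leaf yields $Q(S,\bar S)\le n^{-\Omega(b/(q\log b))}$, and the standard conductance lower bound on the mixing time then produces $\tau(\calM,1/(2e))\ge n^{\Omega(b/(q\log b))}$.

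The main technical obstacles are the sharp per-level sensitivity estimate $\exp(-\Omega(b/q))$ and the companion inductive lower bound $p_h\ge p_*$. Both require careful control of the weak dependencies between children's forcings that arise from conditioning on the parent's colour, and the chaining of sensitivity events across levels is most cleanly justified by an inductive argument coupling the Glauber move with a virtual resampling of the subtrees encountered along the path to the root. The inductive lower bound $p_h\ge p_*$ adapts reconstruction-style analyses such as those developed in~\cite{BVV, mosselsurvey}.
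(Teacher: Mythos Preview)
Your proposal is correct and follows the same recursive-majority conductance argument as the paper, but the paper's execution is tighter in three places worth noting.

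First, the bottleneck set: instead of $S=\{x:A_{r(T)}(x)=\{0\}\}$ with $\pi(S)=\Theta(1/q)$, the paper takes the union over the first $\lfloor q/2\rfloor$ colours, yielding $\pi(S)\pi(\overline S)\ge 2/9$ independently of~$q$; this is what produces the explicit constant $\tfrac{2}{9}$ in the statement. Second, your inductive bound $p_h\ge p_*$ is replaced by the sharper $u_h:=1-p_h\le 1/b$, obtained by the same one-line union bound you wrote; feeding $p\ge 1-1/b$ rather than $p\ge p_*$ into the sensitivity estimate gives the clean per-level bound $\varepsilon=(q-1)\exp\bigl(-(b-2)/(q-1)\bigr)$ with no unspecified constant, and this is exactly what yields the explicit exponent $\tfrac{b-2}{6(q-1)\ln b}$. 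Third, the independence issue you flag as a technical obstacle evaporates once you condition on the colouring along the leaf-to-root path: the off-path subtrees are then genuinely independent, and the $\ell$-loose bound at each level holds uniformly under any such conditioning, so the product $\varepsilon^{H-1}$ is immediate---no coupling or virtual resampling is needed.
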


\section{Bounds on $H$}

The calculations arising in the derivation of
Theorems~\ref{thm:upper} and~\ref{thm:lower}
involve~$H$. It is clear from Equation~(\ref{eq:H}) that
$H=\Theta(\log(n)/\log(b))$. Since
we give explicit bounds in the statement of 
the theorems, we also require
upper and lower bounds on $H$.
We record these here. Note that the bounds can be improved, but
we prefer to avoid the complication. 

\begin{lemma}
\label{lem:H}
$H+1\leq \lg(n)+1$ and
$H\leq \ln(n)/\ln(b)$.
If $n\geq b^3$ then $H-1 \geq \ln(n)/3\ln(b)$.
\end{lemma}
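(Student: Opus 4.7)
The plan is to derive both directions from the closed form $n=(b^{H+1}-1)/(b-1)$ given in Equation~(\ref{eq:H}), together with the assumption $b\geq 2$.

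For the two upper bounds on $H$, I would first observe the elementary lower bound $n\geq b^{H}$ (this is true just because the tree already has $b^{H}$ leaves; equivalently, $(b^{H+1}-1)/(b-1)\geq b^H$ for all $b\geq 2$). Taking $\log_b$ gives $H\leq \ln(n)/\ln(b)$, which is one of the claimed inequalities. Since $b\geq 2$, the same bound also yields $H\leq \log_b(n)\leq \lg(n)$, hence $H+1\leq \lg(n)+1$.

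For the lower bound on $H$, the idea is to match $n$ with an upper bound in terms of $b^{H+1}$. Since $b\geq 2$ implies $b-1\geq 1$, the formula gives $n\leq b^{H+1}-1 < b^{H+1}$, so $\ln(n) < (H+1)\ln(b)$ and therefore $H-1 > \ln(n)/\ln(b)-2$. Now I would use the hypothesis $n\geq b^3$, which is exactly $\ln(n)/\ln(b)\geq 3$. Writing
\[
\frac{\ln n}{\ln b}-2 \;=\; \frac{\ln n}{3\ln b}+\frac{2}{3}\!\left(\frac{\ln n}{\ln b}-3\right)\!,
\]
the second term is nonnegative precisely when $n\geq b^3$, so in that regime $H-1\geq \ln(n)/(3\ln(b))$, completing the proof.

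There is no real obstacle; the whole statement is a routine manipulation of the identity for $n$. The only mild point to be careful about is that the lower bound $H-1\geq \ln(n)/(3\ln b)$ is not sharp for small $H$, which is exactly why the hypothesis $n\geq b^3$ (equivalently $H\geq 2$ in the worst case) is needed: below that threshold the subtracted constant $2$ would dominate. Picking the coefficient $1/3$ is precisely what makes the inequality valid at the boundary $n=b^3$ and beyond.
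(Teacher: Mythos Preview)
Your proof is correct and follows essentially the same route as the paper: both arguments reduce to the sandwich $b^{H}\leq n<b^{H+1}$ (the paper obtains it by algebraic manipulation of the formula $H+1=\log_b((b-1)n+1)$, you via the leaf count and the bound $b-1\geq1$), take logarithms, and then use $n\geq b^3$ to absorb the subtracted constant~$2$ in the lower bound. Your decomposition $\ln n/\ln b-2=\tfrac{1}{3}\ln n/\ln b+\tfrac{2}{3}(\ln n/\ln b-3)$ is a slightly tidier way to finish than the paper's phrasing, but the content is identical.
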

 
\begin{proof}
For the first upper bound, use Equation~(\ref{eq:H}) to see that
$$H+1 = {\log_b((b-1)n+1)}
\leq \log_b(b n) = 1 + \log_b(n) \leq 1 + \lg(n),
$$
since $n\geq 1$ and $b\geq 2$.
For the second upper bound, note that
$$
H   = \frac{\ln\left(
n\left(b - 1 + \frac1n\right)\right)}{\ln(b)}-1
= 
\frac{\ln(n)}{\ln(b)} - 
\frac{\ln(b)-\ln(b-1+1/n)}{\ln(b)}
\leq \frac{\ln(n)}{\ln(b)}.
$$
Finally, for the lower bound,
note that
$$
H -1  = \frac{\ln\left(
n\left(b - 1 + \frac1n\right)\right)}{\ln(b)}-2
= 
\frac{\ln(n)}{\ln(b)} +
\frac{\ln\left(b - 1 + \frac1n\right)}
{\ln(b)}-2.
$$
Dropping the non-negative middle term, this is at least
$\frac{\ln(n)}{\ln(b)} -2$,
which gives the result since $\ln(n)/3\ln(b)\geq 1$.
\end{proof}

\section{The upper bound}

In this section we prove Theorem~\ref{thm:upper}.
We will use the canonical paths method of Jerrum and Sinclair~\cite{MR1003059}.
Let $\calM'$ be the trivial Markov chain on $\Omega$  that moves from a state~$x$ 
to a new state~$y$ by selecting~$y$ u.a.r. from $\Omega$. Let $P'$ be the transition matrix of $\calM'$.
Clearly, for any $\delta'>0$, $\tau(\calM',\delta')=1$.
we will definine canonical paths between pairs of colourings in $\Omega$. These
canonical paths will constitute what is called an $(\calM,\calM')$-flow.
Then Theorem~\ref{thm:upper} follows from the following
proposition (which is Observation~13 in the expository paper \cite{comparison})
taking $A(f)$ to be the \emph{congestion} of the flow and $c$ to be $1/q$.
The proof of Proposition~\ref{obsmakeodd} combines Diaconis and Saloff Coste's
comparison method \cite{MR1233621} with upper and lower bounds on mixing time \cite{ MR657512, MR1097463, MR1211324}
along lines first proposed by  Randall and Tetali \cite{MR1757972}.
See \cite{comparison} for details.

\begin{proposition} 
\label{obsmakeodd} Suppose that $\calM$ is a reversible ergodic
Markov chain with  transition matrix~$P$ and stationary
distribution~$\pi$ and that $\calM'$ is another reversible ergodic
Markov chain with the same stationary distribution. Suppose that $f$ is a
$(\calM,\calM')$-flow. Let $c = \min_x P(x,x)$, and assume $c>0$.
Then, for any $0<\delta'<1/2$,
$$\tau_x(\calM,\delta) \leq
\max\left\{A(f)
\left[\frac{\tau(\calM',\delta')}{\ln(1/2\delta')}+1\right],\,
   \frac1{2c}\right\}\ln\frac1{\delta \pi(x)}.
$$
\qed
\end{proposition}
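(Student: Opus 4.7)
The plan is to assemble three standard ingredients in the right order: a Dirichlet-form comparison driven by the flow~$f$, the conversion of the hypothesis $\tau(\calM',\delta')$ into a lower bound on the spectral gap of $\calM'$, and the conversion of the resulting spectral gap of $\calM$ into a mixing-time bound, using~$c$ to control the smallest eigenvalue. No new idea is introduced; the work is in bookkeeping.

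First I would apply the Diaconis--Saloff-Coste comparison inequality. Because $f$ is an $(\calM,\calM')$-flow with congestion $A(f)$, routing each transition of $\calM'$ along the canonical paths of~$f$ and summing yields $\mathcal{E}_{\calM'}(\phi,\phi)\le A(f)\,\mathcal{E}_{\calM}(\phi,\phi)$ for every $\phi\colon\Omega\to\mathbb{R}$, with both Dirichlet forms taken with respect to the common stationary distribution $\pi$. Combined with the variational characterization of the spectral gap, this gives $1-\lambda_2(\calM)\ge \bigl(1-\lambda_2(\calM')\bigr)/A(f)$.

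Next I would extract a lower bound on $1-\lambda_2(\calM')$ from the hypothesis on $\tau(\calM',\delta')$. The classical inequality $\tau(\calM',\delta')\ge \tfrac{\lambda_2(\calM')}{1-\lambda_2(\calM')}\ln\!\bigl(1/(2\delta')\bigr)$ rearranges to give $1-\lambda_2(\calM')\ge \ln(1/(2\delta'))/\bigl(\tau(\calM',\delta')+\ln(1/(2\delta'))\bigr)$, which is precisely the shape of the first bracketed factor in the statement once combined with the previous step. Separately, the assumption $c=\min_x P(x,x)>0$ forces $\lambda_{\min}(P)\ge 2c-1$, so the \emph{absolute} spectral gap of $\calM$ is at least $\min\{1-\lambda_2(\calM),\,2c\}$. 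Inverting that minimum produces the maximum of $A(f)[\tau(\calM',\delta')/\ln(1/(2\delta'))+1]$ and $1/(2c)$ that appears in the bound.

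Finally I would feed this absolute-gap estimate into the standard bound $\tau_x(\calM,\delta)\le (\text{abs.\ gap})^{-1}\ln\!\bigl(1/(\delta\pi(x))\bigr)$, which follows from the spectral decomposition of the self-adjoint operator $P$ acting on $L^2(\pi)$ together with the usual Cauchy--Schwarz passage from $\chi^2$-distance to total variation. Combining gives the claim. The main obstacles are purely technical: one must ensure the comparison inequality is applied in the direction that bounds $\mathrm{gap}(\calM)$ below in terms of the known $\mathrm{gap}(\calM')$ (this is the Randall--Tetali orientation, and requires the flow to route $\calM'$-transitions along $\calM$-paths rather than the reverse), and one must handle both branches of the $\max$ simultaneously, since the low-temperature regime could in principle make either branch dominate even though in our colouring application $A(f)$ is by far the binding constraint.
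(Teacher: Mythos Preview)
Your proposal is correct and matches the paper's approach exactly: the paper does not prove this proposition in situ but states that its proof ``combines Diaconis and Saloff-Coste's comparison method with upper and lower bounds on mixing time along lines first proposed by Randall and Tetali,'' deferring to \cite{comparison} for details. Your three-step assembly---Dirichlet-form comparison via the flow, conversion of $\tau(\calM',\delta')$ to a gap lower bound, and the absolute-gap-to-mixing-time inequality using $c$ to handle the smallest eigenvalue---is precisely that recipe.
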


For each pair of distinct colourings 
$x,y\in \Omega$
we will construct a path $\gamma_{x,y}$
from~$x$ to~$y$ using transitions of $\calM$.
This gives an $(\calM,\calM')$-flow $f$
with congestion
\begin{align}
\nonumber
A(f)&=\max_{z,w} \frac{1}{\pi(z) P(z,w)} \sum_{x,y: (z,w)\in \gamma_{x,y}}
|\gamma_{x,y}| \pi(x) P'(x,y)\\
&=
\label{eq:flow}
\frac{n q}{|\Omega|}
\max_{z,w} \sum_{x,y: (z,w)\in \gamma_{x,y}}
|\gamma_{x,y}|,
\end{align}
where the maximum is over 
pairs of distinct states $z$ and $w$ in $\Omega$ with $P(z,w)>0$ (hence, $P(z,w)=1/nq$)
and $|\gamma_{x,y}|$ denotes the length of $\gamma_{x,y}$, which is the number of transitions
on the path.
We will prove the following lemma below. 
\begin{lemma}
\label{lem:calc}
The canonical paths  correspond to an $(\calM,\calM')$-flow $f$ with
$A(f) \leq b q (H+1) n^2 9^{b H}$.
\end{lemma}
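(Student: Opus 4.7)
The plan is to define the canonical paths $\gamma_{x,y}$ recursively on the tree, and then bound the congestion in \eqref{eq:flow} by an encoding argument. For each subtree $T_v$ I would specify how to transform $x(T_v)$ into $y(T_v)$ while the parent of $v$ is held at a fixed colour: process the children $c_1,\dots,c_b$ of $v$ in a canonical (say left-to-right) order; for each $c_i$ in turn, recursively transform $x(T_{c_i})$ into $y(T_{c_i})$ while $v$ still carries colour $x(v)$; then finally recolour $v$ from $x(v)$ to $y(v)$. A recolouring move may be blocked (for instance when $y(c_i)$ equals the current colour of $v$, or when $y(v)$ equals the parent's fixed colour), and in that case I would route via an intermediate ``buffer'' colour. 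Since $q\geq 3$ and each vertex has at most two forbidden colours at any intermediate moment, a buffer colour always exists; a decoder-friendly choice is the smallest admissible colour in $[q]$. Each vertex is recoloured only a constant number of times per recursive call in which it is the ``current root,'' so $|\gamma_{x,y}|\leq O(n(H+1))$.

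The main step is bounding, for each target transition $(z,w)$ of $\calM$, the number of pairs $(x,y)$ whose canonical path crosses $(z,w)$. Let $u$ be the vertex where $z$ and $w$ differ. From $(z,w)$ alone, the deterministic recursion rule lets me identify the chain of ancestors $u=v_k\prec v_{k-1}\prec\cdots\prec v_0=r(T)$ corresponding to the currently active recursive calls. Off this ``spine'' of $H+1$ vertices, every other vertex is either still in its $x$-colour or already switched to its $y$-colour, and $z$ itself tells me which. So the only data needed to recover $(x,y)$ from $z$ lives at the $O(bH)$ vertices that are either on the spine or adjacent to it. The plan is to define an injective encoding $\eta(x,y,z,w)$ that stores, for each such vertex, a constant-size record---roughly its $x$-colour, its $y$-colour, and a phase indicator distinguishing pre-buffer, buffer, and post-buffer states---so that $|\mathrm{range}(\eta)|\leq 9^{bH}$. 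Consequently the number of pairs $(x,y)$ routing through $(z,w)$ is at most $|\Omega|\cdot 9^{bH}$: given $(z,\eta)$ we reconstruct the endpoint $y\in\Omega$ and then $x$.

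The main obstacle I anticipate is making the encoding genuinely injective in the presence of buffer detours: every such detour must be tied to a deterministic rule that the decoder can replay without ambiguity, which in turn constrains the ordering of children, the choice of buffer colour, and the exact moment at which buffers are installed and removed. Once this is in place, substituting $|\gamma_{x,y}|\leq O(n(H+1))$ and the encoding bound of $|\Omega|\cdot 9^{bH}$ into \eqref{eq:flow} cancels the two factors of $|\Omega|$. The residual factors of $b$, $q$, and $(H+1)$ in the target bound arise from the $b(H+1)$ candidate positions for the active spine-adjacent vertex and the $q$-fold ambiguity in the specific colour change at the transition, yielding the claimed $A(f)\leq bq(H+1)n^2 9^{bH}$.
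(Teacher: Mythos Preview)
Your high-level plan (canonical paths, then count how many pairs $(x,y)$ can route through a fixed transition) is right, but two pieces of the proposal do not hold up as written.

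\textbf{The encoding claim is where the real gap is.} You assert that ``the only data needed to recover $(x,y)$ from $z$ lives at the $O(bH)$ vertices that are either on the spine or adjacent to it,'' and that a record of size $9^{bH}$ together with $z$ lets you reconstruct first $y$ and then $x$. This is not true. Off the spine, each child subtree of a spine vertex is, in $z$, either wholly in its $x$-state or wholly in its $y$-state; to recover the \emph{other} endpoint on that subtree you need a full proper colouring of it, not a constant-size tag at its root. The factor $|\Omega|$ in the target bound is precisely the cost of one such ``complementary'' colouring of the whole tree, and this complementary object is not in general proper along spine edges, so it is not simply an element of~$\Omega$ that you can read off. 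Your $9^{bH}$ record cannot encode these missing subtree colourings, and nothing you wrote explains why $9^{bH}$ (rather than, say, $q^{bH}$) is the right correction factor.

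\textbf{The paper's route is different in a way that matters.} Rather than ``process children, then root, with local buffer colours when blocked,'' the paper first applies to each child subtree a global cyclic permutation of colours ($\mathit{Cycle}^{\pm}$) so that the child-root avoids both $x(r(\That))$ and $y(r(\That))$; then recolours the root; then recurses. The $\mathit{Cycle}$ step is what makes the counting tractable: an observed $\mathit{Cycle}$ transition in a height-$h$ subtree is consistent with at most $2^{bh}$ start states, and for a subtree that has already been cycled there are only three possible permutations, giving a clean $3^{b}$ per level. The count $P(h)$ of consistent pairs then satisfies a recurrence which, after normalising by the number $C(h)$ of colourings, becomes
\[
\chi(h)\;\leq\;\Bigl(\tfrac{3q}{q-1}\Bigr)^{b}\bigl[2^{(h-1)b}+\chi(h-1)\bigr],
\]
and this is what yields $\chi(H)\leq 9^{bH}$. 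The $9^{bH}$ is the output of this recurrence, not the size of a per-vertex encoding. Your local-buffer scheme has no analogue of the ``only three permutations'' fact, and you have not specified the buffers precisely enough to set up any recurrence at all (in particular, when $y(c_i)=x(v)$ the conflict is with the \emph{target} of the recursive call, not with a single blocked move, so a one-vertex detour does not obviously suffice).

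If you want to keep your order of operations, you would need to define the complementary object carefully, bound the number of edges on which it can fail to be proper (all incident to the spine), and track the resulting correction multiplicatively down the spine---essentially rederiving a recurrence like the one above. As it stands, the step from ``constant-size record at $O(bH)$ vertices'' to ``at most $|\Omega|\cdot 9^{bH}$ pairs'' is the missing idea.
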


Theorem~\ref{thm:upper} follows.
Combining Proposition~\ref{obsmakeodd} with $\delta'=1/2e^2$
and Lemma~\ref{lem:calc}, we get
$$\tau_x(\calM,\delta) \leq
b q (H+1) n^2 9^{b H} \left(\frac{1}{2}+1\right)\ln(|\Omega|/\delta).$$
Since $|\Omega|\leq q^n$,
\begin{align*}
\tau_x(\calM,1/(2e)) &\leq
b q (H+1) n^2 9^{b H} \left(\frac{1}{2}+1\right)\ln(2eq^n)
\\
&\leq
(H+1) b q n^2 \frac32 (2+ n \ln(q)) 9^{b H} \\
&\leq
(H+1) b q^2 n^3 3  e^{3 b H}.
\end{align*}
Theorem~\ref{thm:upper} 
then follows by applying the two upper bounds
in Lemma~\ref{lem:H}.
  
\subsection*{Proof of Lemma~\ref{lem:calc}}

\subsubsection*{Defining the canonical paths: a special case}

We start by defining 
paths between colourings~$x$ and~$y$
for the special case in which, for all $v\in V$,
$y(v)=x(v)+1 \pmod q$. 
The sequence of colourings on the path 
is defined to be the sequence of colourings visited by procedure
$\mathit{Cycle}^{+}$ below when it is called with the input~$T$,
which is initially coloured~$x$.

Here is the description of procedure $\mathit{Cycle}^{+}(\That)$,
where $\xhat$ is a global variable, representing the current colouring
of tree $T$, and the input parameter $\That$ may be any of the subtrees~$T_v$.

\begin{enumerate}

   \item Let 
$\That_{1},\ldots,\That_{b}$ be the subtrees rooted 
at the children of $r(\That)$ and let  $S=\big\{i:x(r(\That_{i}))+1\not=x(r(\That))
\pmod q
\big\}$,

   \item For each $i\in S$ do $\mathit{Cycle}^{+}(\That_{i})$.

   \item Recolour the root~$r(\That)$ so that
   $\xhat(r(\That))=x(r(\That))+1\pmod q$.

   \item For each $i\notin S$ do $\mathit{Cycle}^{+}(\That_{i})$.
\end{enumerate}

Since $q\geq3$, we are guaranteed that
$x(r(\That))+1\not=\xhat(r(\That_{i}))\pmod q$,
for all $i$,
after line~2;  this ensures that the root can
be  recoloured in line~3.

\subsubsection*{Analysis of the special case}

Suppose we observe a transition at some point during the
execution of the procedure
$\mathit{Cycle}^{+}(\That)$, in which the colouring $\xhat$
is transformed by adding $1$ to the colour of some vertex $v$ (modulo~$q$).
How many initial colourings~$x(\That)$ (and hence 
how many final colourings~$y(\That)$)
are consistent with this observed transition?

We will let $s(h)$ denote the maximum number of 
consistent initial colourings $x(\That)$, maximised over all
trees $\That$ of height~$h$ and over all possible transitions.
We will compute an upper bound on $s(h)$.

{\bf Case 1:\quad}  Suppose that $v=r(\That)$.

The subtrees $\That_{i}$ with $i\in S$ have already been processed by
the time that the transition takes place, so
$\xhat(T_{i})=y(T_{i})$ for these trees.  The subtrees with
$i\notin S$ are yet to be processed, so for these trees we have
$\xhat(T_{i})=x(T_{i})$.  However, we do not know the set~$S$
from observing the transition from~$\xhat$.  Thus, as many as
$2^{b}$ initial colourings $x(\That)$ may be consistent with the
observed transition from~$\xhat$.

{\bf Case 2:\quad} Otherwise, $v$ is in one of the subtrees $\That_{k}$ 
rooted at one of the children
of $r(\That)$.  Then, by the
argument of Case~1, there are two choices for the initial colouring
$x(T_{i})$ of every subtree with $i\not=k$;  also there
are two possibilities for $x(r(\That))$, since we don't know whether
line~(3) has been executed at the point of the transition.   Then $s(h)$ satisfies the
recurrence $s(h)\leq\max\{2^{b},2^{b}s(h-1)\}$ with initial condition
$s(0)=1$.  Solving the recurrence, we
discover that at most
\begin{equation}
    \label{eq:sBd}
s(h)\leq2^{b h}
\end{equation}
initial colourings~$x(\That)$
are consistent with the observed transition, so
there are at most 
$s(H) \leq 2^{b H}$ initial colourings~$x$ of~$T$ 
consistent with an observed transition 
  of the procedure
$\mathit{Cycle}^{+}(T)$

\subsubsection*{Defining the canonical paths: the general case}

Let $\mathit{Cycle}^{-}$ be defined analogously
to $\mathit{Cycle}^{+}$ but implementing the permutation
of colours that subtracts~1 (modulo $q$) from every colour;
that is, $y(v)=x(v)-1\pmod q$ for all $v\in V$.

Let $F\subset[q]$ be a set of ``forbidden colours'' of size at
most two.
Given $\mathit{Cycle}^{+}$ and $\mathit{Cycle}^{-}$ it is easy
to implement a procedure $\mathit{Cycle}(\That,F)$ that systematically
recolours the tree $\That$ so that the new colour assigned
to $r(\That)$ avoids the forbidden colours~$F$:  simply apply
$\mathit{Cycle}^{+}$ or $\mathit{Cycle}^{-}$ or neither
in order to bring a colour not in~$F$ to the root of~$\That$.
If we observe a transition during the execution of
$\mathit{Cycle}(\That,F)$ we can tell whether it comes from
$\mathit{Cycle}^{+}$ or from $\mathit{Cycle}^{-}$.

The recursive procedure $\mathit{Recolour}$, to be described
presently, provides
a systematic approach to transforming
an arbitrary initial colouring~$x$ to an arbitrary
final colouring~$y$ using single-vertex updates.
In doing so, it defines
canonical paths between arbitrary pairs of proper colourings $x$ and $y$
of~$T$.  
The sequence of colourings on the path $\gamma_{x,y}$ is defined to be the
sequence of colourings visited by procedure $\mathit{Recolour}$
when it is called with the input $T$ (which is initially coloured $x$) and 
with colouring $y$.

Like $\mathit{Cycle}^{+}$,
the procedure $\mathit{Recolour}$
takes an argument $\That$, which is the 
tree which will be recoloured from $x(\That)$ to $y(\That)$.
It also takes the argument $y$.
As before, $\xhat$ is a global variable representing the current colouring of the
tree $T$, which is initially coloured $x$.
Here is the description of procedure
$\mathit{Recolour}(\That,y)$.
 
\begin{enumerate}
    \item  Let 
$\That_1,\ldots, \That_b$ be the subtrees rooted at the children of $r(\That)$.

    \item  For each $i$, $1\leq i\leq b$,
    do $\textit{Cycle}(\That_{i},\{x(r(\That)),y(r(\That))\})$.
    (This step permutes the colours in a subtree, to allow
    the root to be recoloured in the following step.)

    \item  Assign the root $r(\That)$   its final colour~$y(r(\That))$.

    \item  For each $i$, $1\leq i\leq b$,
    do $\mathit{Recolour}(\That_{i},y)$.
\end{enumerate}

\subsubsection*{Analysis of the canonical paths}

Suppose 
we observe a transition at some point during the execution of a procedure call
$\mathit{Recolour}(\That,y)$ when $\That$ has height $h$.
Let
$\numCP(h)$ be an upper bound on the number of 
pairs $(x(\That),y(\That))$ consistent with this transition, maximised over all
trees $\That$ of height~$h$ and over all possible transitions.
Let
$$
\numT(h)=q(q-1)^{(b^{h+1}-1)/(b-1)-1}
$$
be the number of proper colourings of a $b$-ary
tree of height $h$.
Note that $\numCP(H)$ is an upper bound on the number of canonical paths
$\gamma_{x,y}$ using a given transition. 
In order to compute the congestion~$A(f)$ using Equation~(\ref{eq:flow}),
we need to  compute an upper bound
on $\numCP(H)$. We will compute an upper bound
on $\numCP(h)$  by induction on $h$.
The base case is $\numCP(0)=1$. 

Now suppose $h>0$.
Suppose that the transition starts at a colouring $\xhat$
and changes the colour of vertex~$v$ from $\xhat(v)$
to a new colour.

{\bf Case 1:\quad} First, suppose $v=r(\That)$.  
We start by bounding the number of colourings
$x(\That)$ that are consistent with the transition.
From the transition, we know the initial colour of the root, $x(r(\That))$.
For each subtree~$\That_{i}$, we know that the initial
colouring $x(\That_{i})$ can be obtained by permuting the
colours in $\xhat(\That_{i})$.  There are three
possible permutations (corresponding to adding $-1,0$ or $1$
modulo~$q$).  
So the number of possibilities for $x(\That)$ is at most
$3^{b}$.  
Next we bound the number of consistent colourings $y(\That)$.
The colour $y(r(\That))$ is fixed by the transition,
but we know nothing about the colourings of the subtrees $\That_{i}$
beyond the fact that they must be consistent with the root
being coloured $y(r(\That))$.   Thus there are at most
${((q-1)\numT(h-1)/q)}^{b}$ possibilities for~$y(\That)$.
Overall, we have the upper bound
\begin{equation}
    \numCP(h)\leq{(3(q-1)\numT(h-1)/q)}^{b}
    \label{eq:numCProot}
\end{equation}
in the case $v=r(T)$.

{\bf Case 2:\quad} Now suppose $v$ is contained in one of the subtrees~$\That_{k}$.
It could be that the transition under consideration is employed during
Step~2 of $\mathit{Recolour}$ (Type~A), or in Step~4 (Type~B).

{\bf Case 2A:\quad} Consider first pairs of Type~A.  How many pairs
$(x(\That),y(\That))$ of initial and final colourings
may use the transition?  We'll bound this number by considering
separately the pairs $(x(r(\That)),y(r(\That)))$ and
$(x(\That_{i}),y(\That_{i}))$ and multiplying the results.
For the root, $x(r(\That))=\xhat(r(\That))$, while 
there are $q$ possibilities for $y(r(\That))$.  
For $i<k$, there are at most
three possibilities for the colouring $x(\That_{i})$,
and at most $\numT(h-1)$ for $y(\That_{i})$.  For $i>k$,
$x(\That_{i})$ is fixed by the transition, while there are
at most $\numT(h-1)$ possibilities for $y(\That_{i})$.
Now consider the possibilities for~$x(\That_k)$ and~$y(\That_k)$,
starting with $x(\That_k)$.
Given the transition from~$\xhat(v)$ to its new colour  
we can tell whether the instance of
$\textit{Cycle}(\That_{k},\{x(r(\That)),y(r(\That))\})$
is applying $\mathit{Cycle}^{+}$ to~$\That_k$ or $\mathit{Cycle}^{-}$
to~$\That_k$.
In either case,~(\ref{eq:sBd}) guarantees that the number of initial
colourings~$x(\That_k)$ that are consistent with the
transition is at most $2^{b (h-1)}$.
Since the number of possibilities for $y(\That_k)$ is at most $\numT(h-1)$,
the
number for the pair
$(x(\That_{k}),y(\That_{k}))$
is bounded by $2^{b(h-1)}\numT(h-1)$.  This gives an upper bound of
$3^{b}q{(2^{h-1}\numT(h-1))}^{b}$ on the total number of 
pairs
$(x(\That),y(\That))$ such that the given transition is a Type~A transition.

{\bf Case 2B:\quad} Finally, consider pairs of Type~B.
For the root, $x(r(\That))$ is arbitrary, while
$y(r(\That))=\xhat(r(\That))$, so there are $q$~possibilities in all.
For $i<k$, there are at most $\numT(h-1)$
possibilities for the colouring $x(\That_{i})$,
while $y(\That_{i})$ is fixed.  For $i>k$, there are
three possibilities for $x(\That_{i})$, while there are
at most $\numT(h-1)$ possibilities for $y(\That_{i})$.
Inductively, the number of possibilities for the pair
$(x(\That_{k}),y(\That_{k}))$ is $\numCP(h-1)$.
This gives an upper bound of
$3^{b}q\numT(h-1)^{b-1}\numCP(h-1)$ on the total number of 
pairs
$(x(\That),y(\That))$ such that the given transition is a Type~B transition.

{\bf Completing Case 2:\quad}
Summing the bounds on the 
number of 
pairs
$(x(\That),y(\That))$ such that the given transition is a Type~A or Type~B transition
we obtain an upper bound of
\begin{equation}
    \numCP(h)\leq3^{b}q\,\numT(h-1)^{b-1}
    \big[2^{(h-1)b}\numT(h-1)+\numCP(h-1)\big]
    \label{eq:numCPnotRoot}
\end{equation}
on the total number of canonical paths using a given transition
in the case $v\not=r(\That)$.  Notice that~(\ref{eq:numCPnotRoot}) always
dominates~(\ref{eq:numCProot}) since $h\geq 1$.   
Now 
let
$\chi(h)=\numCP(h)/\numT(h)$. Since 
$q^{b-1}\numT(h)=(q-1)^{b}\numT(h-1)^{b}$, we have the recurrence:
\begin{equation}
\label{eq:chi}
\chi(h)\leq \left(\frac{3q}{q-1}\right)^{b}
\big[2^{(h-1)b}+\chi(h-1)\big],
\end{equation}
with initial condition $\chi(0)=q^{-1}$. Now note that
the recurrence (\ref{eq:chi}) satisfies
$\chi(h)\leq 9^{bh}$.

{\bf Completing the Analysis:\quad}  
Let $\CPlength(h)$ be an upper bound on the  number
of updates performed by $\mathit{Recolor}(\That,y)$ when
$\That$ has height $h$. Thus, 
$\CPlength(H)$ is an upper bound on the length of a canonical path $\gamma_{x,y}$.

Now, by Equation~(\ref{eq:flow}),
$$A(f) = \frac{n q}{|\Omega|}
\max_{z,w} \sum_{x,y: (z,w)\in \gamma_{x,y}}
|\gamma_{x,y}|
\leq \CPlength(H)
\frac{n q}{|\Omega|}
P(H)
= 
 \CPlength(H) \,n\, q\, \chi(H),$$
so to prove Lemma~\ref{lem:calc} we need an upper bound on $\chi(h)$.

The subroutine $\mathit{Cycle}$ creates
paths of length $(b^{h+1}-1)/(b-1)$.  The recurrence governing
$\CPlength(h)$ is thus $\CPlength(h)=(b^{h+1}-1)/(b-1)+b\CPlength(h-1)$, with initial
condition $\CPlength(0)=1$.  
Note that $\CPlength(h)\leq (h+1)b^{h+1}$.
This can be verified by induction on~$h$.
For the inductive step,
$$\lambda(h) = \sum_{j=0}^h b^j + b\lambda(h-1)
\leq \sum_{j=0}^h b^j + b h b^h,$$
which is at most $(h+1)b^{h+1}$
since $\sum_{j=0}^h b^j\leq b^{h+1}$ for $b\geq 2$.
Thus
$\lambda(H)\leq (H+1) b^{H+1} \leq b(H+1)n$.
 Putting it all together, the congestion~$A(f)$ is bounded above
by $  qn\chi(H)\CPlength(H)$ which proves Lemma~\ref{lem:calc}.

\section{The lower bound}
 
Suppose
\begin{equation}\label{eq:assumption}
    2q\leq b/\ln(b).
\end{equation}
 
The lower bound proof will use the following 
fact.
\begin{lemma}
\label{lem:b}
If $q\geq 3$ and
$2q \leq b/\ln(b)$ then $b-2\geq 2(q-1)\ln(q-1)$.
\end{lemma}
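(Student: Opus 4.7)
The plan is to bootstrap from the hypothesis $b\geq 2q\ln(b)$ to show that $b$ is automatically comfortably large, and then obtain the claim by a direct decomposition of $2q\ln(b)$. Specifically, since we want $b-2\geq 2(q-1)\ln(q-1)$, it is enough to show
$$2q\ln(b)\geq 2+2(q-1)\ln(q-1),$$
because the hypothesis will then chain this lower bound onto $b$.

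First I would rule out tiny values of $b$. Since $b$ is an integer with $b\geq 2$, and if $b=2$ the hypothesis would force $q\leq 1/\ln(2)<2$, contradicting $q\geq 3$, we must have $b\geq 3$, hence $\ln(b)\geq \ln(3)>1$. Feeding this back into the hypothesis gives
$$b\;\geq\;2q\ln(b)\;\geq\;2q,$$
which in particular yields $b>q-1$ and therefore $\ln(b)\geq \ln(q-1)$ (using $q\geq 3$ so that $q-1\geq 2$ and $\ln(q-1)$ is nonnegative). Next I would split
$$2q\ln(b)\;=\;2(q-1)\ln(b)+2\ln(b)\;\geq\;2(q-1)\ln(q-1)+2,$$
using $\ln(b)\geq \ln(q-1)$ for the first piece and $\ln(b)\geq 1$ for the second. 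Combining with the hypothesis $b\geq 2q\ln(b)$ gives $b\geq 2(q-1)\ln(q-1)+2$, i.e.\ the desired inequality.

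There is no real obstacle here beyond being careful about which inequalities need $b$ to be at least $e$ versus $b\geq q-1$; the only subtle point is the initial step that uses the integrality $b\geq 2$ together with the hypothesis to get $\ln(b)\geq 1$, after which the decomposition above is essentially mechanical.
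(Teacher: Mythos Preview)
Your proof is correct, and it takes a genuinely different route from the paper's.

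The paper bounds $2(q-1)\ln(q-1)$ \emph{from above}: it uses $q-1\leq b/(2\ln b)$ to obtain
\[
2(q-1)\ln(q-1)\;\leq\;\frac{b}{\ln b}\,\ln\!\left(\frac{b}{2\ln b}\right)
\;=\;b-\frac{b\,\ln(2\ln b)}{\ln b},
\]
and then argues that the subtracted term is at least~$2$, which requires first establishing $b\geq 6$ and then checking the auxiliary inequality $b\geq 2\ln(b)/\ln(2\ln b)$.

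You instead bound $b$ \emph{from below} via the additive split $2q\ln b = 2(q-1)\ln b + 2\ln b$, and handle the two pieces with the simple observations $\ln b\geq\ln(q-1)$ and $\ln b\geq 1$. This avoids the slightly fiddly endgame inequality in the paper's version, at the cost of needing the intermediate bootstrap $b\geq 2q$ to compare $\ln b$ with $\ln(q-1)$. Both arguments rely on the integrality of~$b$ to rule out small values; you only need $b\geq 3$, whereas the paper uses $b\geq 6$. Overall your decomposition is a little cleaner and more self-contained.
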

\begin{proof}
By (\ref{eq:assumption}),
$q-1\leq q \leq b/2\ln(b)$
so
\begin{align*}
2(q-1)\ln(q-1) &\leq \frac{b}{\ln(b)} \ln\left(\frac{b}{2\ln(b)}\right)\\
&= \frac{b}{\ln(b)} 
\left(\ln(b) - \ln(2\ln(b))
\right)
= b - \frac{b\ln(2\ln(b))}{\ln(b)} 
\\
&\leq b-2,
\end{align*}
where the final inequality holds since $q\geq 3$ so $b\geq 6$ 
so $b\geq 2\ln(b)/\ln(2 \ln(b))$.
\end{proof}

Given a colouring $x\in \Omega$, 
define
$$F(x) = \{w\in V \mid
\forall y\in \Omega(T_w) \mbox{ with }
y(L(T_w))=x(L(T_w)) \mbox{ we have }
y(w)=x(w)\}.$$ Informally, $F(v)$ is
the
set of vertices~$w$ of~$T$ whose colour is \emph{forced} by~$x(L(T_w))$.
Our lower bound will be based on a conductance argument
which shows that it takes a while to move from
a colouring~$x$ in which $r(T)$ is forced to be one colour
to a colouring~$y$ in which $r(T)$ is forced to be  another colour.
It is useful to note that $F(x)$ can be defined recursively using
the structure of~$T$. If $w$ is a child of~$v$ we say that
$w$ is $c$-permitting for~$v$ in~$x$ if either $x(w)\neq c$
or $w\not\in F(x)$ (or both).

\begin{observation}
If $h(v)=0$ then $v\in F(x)$. If $h(v)>0$ then
$v\in F(x)$ if and only if, for every colour $c\neq x(v)$,
there is a child $w$ of~$v$ which is 
not $c$-permitting for~$v$ in~$x$.
\label{obs:rec2}
\end{observation}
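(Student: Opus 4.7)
The statement has a trivial base case and a nontrivial equivalence, and my plan is to handle them separately.

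For $h(v)=0$, the vertex $v$ is itself a leaf, so $L(T_v)=\{v\}$ and any $y\in\Omega(T_v)$ with $y(L(T_v))=x(L(T_v))$ satisfies $y(v)=x(v)$. So $v\in F(x)$ is immediate from the definition.

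For $h(v)>0$ I would prove the two implications separately. For the contrapositive of the forward direction, suppose there is a colour $c\neq x(v)$ such that every child $w$ of $v$ is $c$-permitting. The goal is to assemble a colouring $y\in\Omega(T_v)$ with $y(L(T_v))=x(L(T_v))$ but $y(v)=c$, which will witness $v\notin F(x)$. For each child $w$ I split on the definition of $c$-permitting: if $x(w)\neq c$, I take $y$ to agree with $x$ on all of $T_w$, which gives $y(w)=x(w)\neq c$; if instead $x(w)=c$, then being $c$-permitting forces $w\notin F(x)$, so by definition of $F$ there is a proper colouring $y_w$ of $T_w$ with $y_w(L(T_w))=x(L(T_w))$ and $y_w(w)\neq x(w)=c$, and I use this $y_w$ on $T_w$. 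In both cases the colour at $w$ differs from $c$, so setting $y(v)=c$ and using these restrictions on the subtrees yields a proper colouring of $T_v$ with the required leaf values.

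For the reverse direction, suppose that for every $c\neq x(v)$ some child $w$ of $v$ is not $c$-permitting, and let $y\in\Omega(T_v)$ with $y(L(T_v))=x(L(T_v))$. Writing $c=y(v)$, I want to conclude $c=x(v)$. Assume not; then by hypothesis there is a child $w$ with $x(w)=c$ and $w\in F(x)$. The restriction $y|_{T_w}$ lies in $\Omega(T_w)$ and agrees with $x$ on $L(T_w)\subseteq L(T_v)$, so the definition of $F(x)$ gives $y(w)=x(w)=c$, contradicting the properness of $y$ along the edge $vw$.

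I do not expect a genuine obstacle: the whole statement is essentially an unfolding of the definitions. The only point that requires care is remembering, in the forward direction, to use whichever clause of the disjunction ``$x(w)\neq c$ or $w\notin F(x)$'' fits the particular child, and making sure the ``$w\notin F(x)$'' witness has colour different from $c$ at $w$ (which is automatic precisely because we invoke it only when $x(w)=c$).
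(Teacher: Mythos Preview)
Your argument is correct and complete. The paper itself offers no proof for this observation; it is stated as a direct consequence of the definitions of $F(x)$ and ``$c$-permitting'' and left to the reader, so your write-up in fact supplies more detail than the original.
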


The recursive definition of $F(x)$ illustrates the connection
between our conductance argument and 
lower-bound arguments based on \emph{recursive majority functions}
\cite{BKMP, mossel}.

Consider a colouring~$x$ chosen uniformly at random from~$\Omega$.
Suppose $v$ is a vertex at height~$h$,
and let $u_{h}=\Pr(v\not\in F(x))$.  Note that the events $v\not\in F(x)$,
with $v$ ranging over all vertices at height~$h$, are independent
and occur with same probability, namely~$u_{h}$.

\begin{lemma}
\label{lemuh}
$u_h\leq 1/b$.
\end{lemma}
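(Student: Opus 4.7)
Proof plan: The proof will be an induction on $h$. The base case $h=0$ is immediate from Observation~\ref{obs:rec2}, since $v$ is a leaf and thus always in $F(x)$, giving $u_0 = 0 \leq 1/b$.

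For the inductive step, fix $h \geq 1$ and a vertex $v$ at height $h$ with children $w_1,\ldots,w_b$. By Observation~\ref{obs:rec2} (contrapositive), $v \notin F(x)$ iff there exists a colour $c \neq x(v)$ such that \emph{every} child $w_i$ is $c$-permitting, i.e., no child $w_i$ satisfies both $x(w_i)=c$ and $w_i \in F(x)$. Condition on $x(v) = c_0$; by the colour-symmetry of $\pi$ this choice is WLOG. Under this conditioning, the subtrees $T_{w_1},\ldots,T_{w_b}$ are independent, each distributed uniformly on proper $q$-colourings of $T_{w_i}$ with root colour in $[q]\setminus\{c_0\}$. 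By symmetry, for each child $w_i$ and each $c \neq c_0$,
\[
\Pr\bigl(x(w_i) = c \text{ and } w_i \in F(x) \bigm| x(v) = c_0\bigr) \;=\; \frac{1}{q-1}\,(1-u_{h-1}),
\]
the first factor coming from the uniformity of $x(w_i)$ on $[q]\setminus\{c_0\}$ and the second from the inductive hypothesis together with the fact that $\{w_i \in F(x)\}$ depends only on $x(T_{w_i})$.

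Applying independence over the $b$ children and then a union bound over the $q-1$ alternative colours yields the recurrence
\[
u_h \;\leq\; (q-1)\left(1 - \frac{1-u_{h-1}}{q-1}\right)^{\!b} \;=:\; f(u_{h-1}).
\]
The function $f$ is increasing on $[0,1]$, so to close the induction it is enough to prove $f(1/b) \leq 1/b$. Taking logarithms and rearranging, this reduces to showing
\[
\ln b + \ln(q-1) \;\leq\; -\,b \ln\!\left(1 - \frac{1 - 1/b}{q-1}\right).
\]
Applying the elementary bound $-\ln(1-t) \geq t$ to the right-hand side, it suffices to establish
\[
(q-1)\bigl(\ln b + \ln(q-1)\bigr) \;\leq\; b - 1.
\]

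To finish, I would bound the two summands on the left-hand side separately. From the standing assumption $2q \leq b/\ln b$, we obtain $(q-1)\ln b \leq q\ln b \leq b/2$. From Lemma~\ref{lem:b} we obtain $(q-1)\ln(q-1) \leq (b-2)/2$. Summing these two bounds gives exactly $b-1$, which completes the verification and hence the inductive step. The main technical obstacle is the last numerical inequality: the bound $u_h \leq 1/b$ is delicate enough that a naive estimate of $f(1/b)$ would fail, and one really needs both the hypothesis $2q \leq b/\ln b$ and the auxiliary Lemma~\ref{lem:b} to push it through; everything else (the union bound, the conditional independence, and the colour-symmetry arguments) is routine.
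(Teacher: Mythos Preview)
Your proof is correct and follows essentially the same route as the paper: induction on $h$, the same union-bound recurrence $u_h \leq (q-1)\bigl(1-\tfrac{1-u_{h-1}}{q-1}\bigr)^b$, and the same use of $1-t\leq e^{-t}$ to close the induction. The only cosmetic difference is in the final numerical check: the paper verifies $(q-1)\exp\bigl(-\tfrac{b-1}{q-1}\bigr)\leq (q-1)b^{-2}\leq b^{-1}$ directly from the assumption $2q\leq b/\ln b$, whereas you split $(q-1)(\ln b+\ln(q-1))\leq b-1$ into two pieces and invoke Lemma~\ref{lem:b} for the second---an unnecessary detour, since the paper's route avoids that lemma entirely here.
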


\begin{proof}
The proof is by induction on~$h$. Note that $u_0=0$.
For the inductive step, let $v$ be a vertex at height~$h>0$. 
Consider a colouring~$x$ chosen uniformly at random from~$\Omega$.
Fix a colour $c\neq x(v)$ and a child~$w$ of~$v$.
The probability that $x(w)=c$ is $1/(q-1)$. 
To see this, think about constructing the colouring
downwards from the root: Each vertex chooses a colour
uniformly at random from the colours not used by its parent.
Also,
the
probability that $w\in F(x)$ is $1-u_{h-1}$ and this
is independent of the probability that $x(w)=c$.
(The recursive definition of $F(x)$ makes it easy to see that
these events are independent.)
So the probability that $w$ 
is 
$c$-permitting for $v$ in $x$ 
is
$1-(1-u_{h-1})/(q-1)$. These events are independent
for different children~$w$ of~$v$ so the probability that
every child $w$ is 
$c$-permitting for $v$ in $x$ is
 $$
\left(1-\frac{1-u_{h-1}}{q-1}\right)^{b}.
$$

By Observation~\ref{obs:rec2}, the event $v\not\in F(v)$ occurs when 
there exists a colour $c\not=x(v)$
such that every child $w$ 
if $c$-permitting for $v$ in $x$,
so by the union bound:
\begin{align}
u_{h}=\Pr(v\not\in F(x))&\leq
  (q-1)\left(1-\frac{1-u_{h-1}}{q-1}\right)^{b}\notag\\
  &\leq(q-1)\exp\left(-\frac{b(1-u_{h-1})}{q-1}\right)\notag\\
  &\leq (q-1)\exp\left(-\frac{b-1}{q-1}\right)\label{eq:urec1}\\
  &\leq (q-1)b^{-2}\label{eq:urec2}\\
  &\leq b^{-1},\notag
\end{align}
where (\ref{eq:urec1}) applies the induction hypothesis
and (\ref{eq:urec2}) uses assumption~(\ref{eq:assumption}).
 
\end{proof}

Consider a vertex~$v$ of~$T$ with~$h(v)\geq 1$
and a leaf~$\ell$ that is a
descendant of~$v$. Consider $x\in \Omega$. Say that $v$ is
$\ell$-loose in $x$ if there is a $c\neq x(v)$ such that
every child $w$ of $v$, except possibly the one on the path
to~$\ell$, is $c$-permitting for $v$ in $x$.

Let $\Psi_{v,\ell}$ be the probability that $v$ is $\ell$-loose 
in~$x$ when~$x$ is chosen u.a.r.{} from $\Omega$.
Let~$\varepsilon =
(q-1)\exp\left(-\frac{b-2}{q-1}\right)$.

\begin{lemma}
\label{lemPsi}
Consider a vertex~$v$ of~$T$ with $h(v)\geq 1$ and a leaf~$\ell$ that is
a descendant of~$v$.
$\Psi_{v,\ell}\leq
\varepsilon
$.
\end{lemma}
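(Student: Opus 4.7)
The plan is to imitate the proof of Lemma \ref{lemuh} almost verbatim, but weakened to account for the fact that one designated child of $v$ (the one on the path to~$\ell$) is not required to be $c$-permitting. This turns an exponent of $b$ into an exponent of $b-1$, and the point of the calculation will be to absorb that small loss into the exponent by paying for it with the bound $u_{h-1} \leq 1/b$ from Lemma~\ref{lemuh}.

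More concretely, fix a vertex $v$ with $h(v) \geq 1$ and a leaf~$\ell$ descended from~$v$, and draw $x$ uniformly from $\Omega$. For each child $w$ of $v$ other than the one lying on the path from~$v$ to~$\ell$, I would repeat the argument from Lemma~\ref{lemuh}: conditional on $x(v)$, the probability that $x(w)=c$ for any fixed $c \neq x(v)$ is $1/(q-1)$, this event is independent of $\{w \in F(x)\}$, and these events are mutually independent across the $b-1$ relevant siblings. So for any fixed $c \neq x(v)$, the probability that all $b-1$ such children are $c$-permitting for $v$ in~$x$ is exactly $\bigl(1-(1-u_{h-1})/(q-1)\bigr)^{b-1}$.

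Then I would union-bound over the $q-1$ admissible choices of $c$ and use $1+z \leq e^z$ to obtain
\begin{equation*}
\Psi_{v,\ell} \leq (q-1)\left(1 - \frac{1-u_{h-1}}{q-1}\right)^{b-1} \leq (q-1)\exp\!\left(-\frac{(b-1)(1-u_{h-1})}{q-1}\right).
\end{equation*}
Finally, I plug in the bound $u_{h-1} \leq 1/b$ from Lemma~\ref{lemuh}, which gives $(b-1)(1-u_{h-1}) \geq (b-1)^2/b = b-2+1/b \geq b-2$, so the displayed expression is at most $(q-1)\exp(-(b-2)/(q-1)) = \varepsilon$, as required.

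The step I expect to need the most care is the independence claim used to factor the probability that all $b-1$ siblings are simultaneously $c$-permitting: one has to check that, conditioned on $x(v)$, the joint distribution of the pairs $(x(w), \mathbf{1}[w\in F(x)])$ over siblings $w$ factorizes. This follows from the tree structure (each child's subtree is coloured independently once the parent's colour is fixed) and from the fact that $F(x)\cap V(T_w)$ is determined by $x(V(T_w))$, exactly as invoked in the proof of Lemma~\ref{lemuh}. Once this is granted, the rest is the elementary arithmetic above.
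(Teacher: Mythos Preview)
Your proposal is correct and follows exactly the same approach as the paper: replace $b$ by $b-1$ in the Lemma~\ref{lemuh} calculation, apply the union bound over $q-1$ colours, use $1-z\leq e^{-z}$, and then invoke $u_{h-1}\leq 1/b$ to reduce the exponent to $(b-2)/(q-1)$. Your write-up is in fact more detailed than the paper's (which compresses the last two steps into a single line), and your explicit justification of the independence across siblings is a welcome addition.
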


\begin{proof}
 
The calculation very similar to the calculation in the
proof of Lemma~\ref{lemuh}, with $b-1$
replacing~$b$. Let $h=h(v)$. Then 
\begin{align*}
    \Psi_{v,\ell}&\leq(q-1)\left
    (1-\frac{1-u_{h-1}}{q-1}\right)^{b-1}\\
    &\leq(q-1)\exp\left(-\frac{b-2}{q-1}\right),
\end{align*}
where we have used the fact $u_{h-1}\leq b^{-1}$.
\end{proof}

We are now ready to give the lower bound argument.
The \emph{conductance} of a set $S\subseteq \Omega$
is given by
$$\Phi_S(\calM) =
\frac
{
\sum_{x\in S}\sum_{y \in \overline{S}} \pi(x) P(x,y) +
\sum_{x\in\overline{S}}\sum_{y\in S} \pi(x) P(x,y)
}
{2\pi(S)\pi(\overline{S})}.$$
The conductance of $\calM$ is
$\Phi(\calM) = \mathrm{min}_S \Phi_S(\calM)$, where the min
is over all $S\subset \Omega$ with $0<\pi(S)<1$.
The inverse of the conductance of $\calM$ gives a lower
bound on the mixing time of~$\calM$. In particular,
\begin{equation}
\label{eq:cond}
\tau(\calM,1/(2e)) \geq (1/2-1/(2e))/\Phi(\calM).
\end{equation}
Equation~(\ref{eq:cond}) is due to Dyer, Frieze and Jerrum~\cite{DFJ}.
The formulation used here is 
Theorem~17 of the expository paper
\cite{comparison}.

For $c\in[q]$, let
$S_c = \{ x\in \Omega \mid 
(r(T)\in F(x))\wedge
(x(r(T))=c)
\}
$. Let $S_q = \{x \in \Omega \mid r(T)\not\in F(x)\}$.
Clearly, $S_0,\ldots,S_q$ form a partition of $\Omega$.
Let $S=S_0 \cup \cdots \cup S_{\lfloor q/2\rfloor-1}$.
Then  $\Phi(\calM)
\leq \Phi_S(\calM)$.
 
Now by Lemma~\ref{lemuh} we have $0\leq \pi(S_q) \leq 1/b$.
Also, by symmetry, $\pi(S_c)=\pi(S_{c'})$ for $c,c'\in[q]$.
So 
$$\left(1-\frac1b\right) 
\frac{\lfloor q/2\rfloor}{q-1}
\leq \pi(S) \leq 
\frac{\lfloor q/2\rfloor}{q-1}.$$
Since $b\geq 6$ and $q\geq 3$
this gives 
$\tfrac56 \cdot \tfrac12 \leq \pi(S) \leq \tfrac23$,
so 
$\pi(S)\pi(\overline{S})\geq 
\tfrac13\cdot \tfrac23=
\tfrac29$
Thus
$$
\Phi_S(\calM) \leq \frac94
\left(
{
\sum_{x\in S}\sum_{y \in \overline{S}} \pi(x) P(x,y) +
\sum_{x\in\overline{S}}\sum_{y\in S} \pi(x) P(x,y)
}
\right)
,
$$
and by reversibility
\begin{equation}
\label{eq:Phi}
\Phi(\calM) \leq \frac92
\sum_{x\in S}\sum_{y \in \overline{S}} \pi(x) P(x,y) \leq
\frac92
\sum_{x,y} \pi(x) P(x,y),
\end{equation}
where the summation is over
$x$ and $y$ for which 
$r(T)\in F(x)$ and either $r(T)\not\in F(y)$ or
$x(r(T))\neq y(r(T))$.
Note that if~$x$ and~$y$ contribute to the summation in (\ref{eq:Phi})
then since $P(x,y)>0$, they differ on a single vertex.
Since $r(T)\in F(x)$ we cannot move from~$x$ to a proper colouring~$y$ 
by changing the colour of~$r(T)$. Thus the only possibility
is that $r(T)\not\in F(y)$ and $x$ and $y$ differ on a leaf.
Also, given the dynamics, we have $P(x,y) = 1/(n q)$.

\begin{lemma}
\label{lem:boundPhi}
$\Phi(\calM) \leq \frac92 \epsilon^{H-1}$.
\end{lemma}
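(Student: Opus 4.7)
My plan is to bound the right-hand side of~(\ref{eq:Phi}). Since $\pi(x)=1/|\Omega|$ and $P(x,y)=1/(nq)$, the sum $\sum_{x,y}\pi(x)P(x,y)$ in~(\ref{eq:Phi}) equals $(nq|\Omega|)^{-1}$ times the number of pairs $(x,y)\in\Omega^2$ with $r(T)\in F(x)$, $r(T)\not\in F(y)$, and $x,y$ differing only on a single leaf~$\ell$. So I would count such pairs: for each choice of leaf~$\ell$, count the colourings $y\in\Omega$ with $r(T)\not\in F(y)$ that admit some recolouring of $\ell$ producing~$x$ with $r(T)\in F(x)$, and multiply by the at-most-$(q-1)$ choices of~$x$ for each~$y$.

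The key structural claim I would prove is: if $(x,y)$ is such a pair and $v_0=r(T),v_1,\ldots,v_H=\ell$ is the root-to-leaf path through~$\ell$, then every $v_i$ with $0\le i\le H-1$ is $\ell$-loose in~$y$. This follows by downward induction, maintaining the invariant $v_i\in F(x)$ and $v_i\not\in F(y)$. From $v_i\not\in F(y)$, Observation~\ref{obs:rec2} supplies a colour $c_i\neq y(v_i)$ such that every child of $v_i$ is $c_i$-permitting in~$y$; in particular, $v_i$ is $\ell$-loose in~$y$. Off-path children of $v_i$ have identical subtrees in~$x$ and~$y$, so they remain $c_i$-permitting in~$x$. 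Then $v_i\in F(x)$ forces the on-path child $v_{i+1}$ to be the non-$c_i$-permitting one in~$x$, giving $x(v_{i+1})=c_i$ and $v_{i+1}\in F(x)$; combined with $v_{i+1}$ being $c_i$-permitting in~$y$ while $y(v_{i+1})=c_i$, we obtain $v_{i+1}\not\in F(y)$, closing the induction.

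Given the structural claim, the count reduces to bounding, for each~$\ell$, the $\pi$-probability that every internal ancestor of~$\ell$ is $\ell$-loose in a random~$y$. For this I would condition on the path colours $y(v_0),\ldots,y(v_H)$. In a uniformly random proper colouring, conditional on these path colours the subtrees hanging off each $v_i$ away from the path are mutually independent; and the event that $v_i$ is $\ell$-loose in~$y$ depends only on $y(v_i)$ and the off-path subtree rooted at~$v_i$. Hence the loose-events are conditionally independent. Crucially, the bound $\Psi_{v,\ell}\le\epsilon$ from Lemma~\ref{lemPsi} is uniform in the colour of~$v$ (the union-bound calculation that produces it does not depend on $y(v)$), so it applies conditionally, giving $\Pr[\text{all $H$ loose-events hold}]\le\epsilon^H$.

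Summing over the $|L|=b^H$ leaves and using $b^H\le n$ yields $\sum_{x,y}\pi(x)P(x,y)\le\frac{(q-1)b^H}{nq}\epsilon^H\le \epsilon^H\le\epsilon^{H-1}$ (the last step because $\epsilon<1$ in the parameter regime assumed), and multiplying by $9/2$ gives the lemma. I expect the independence step to be the main obstacle: the looseness events are \emph{not} marginally independent, since the path colours are correlated, and the argument requires recognising both that conditioning on those path colours decouples the off-path subtrees and that the bound of Lemma~\ref{lemPsi} survives the conditioning.
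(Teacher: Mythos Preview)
Your proposal is correct and follows essentially the same route as the paper: bound the sum in~(\ref{eq:Phi}) by showing that any contributing pair forces all $H$ internal path vertices to be $\ell$-loose, then invoke Lemma~\ref{lemPsi} and independence. The paper sums over~$x$ rather than~$y$ and simply asserts both the structural implication and the independence of the $\ell$-loose events, whereas your downward induction and conditioning argument supply the missing details; you also obtain $\epsilon^H$ in place of the paper's $\epsilon^{H-1}$, which is a harmless tightening.

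One small correction to your commentary: the $\ell$-loose events along the path \emph{are} marginally independent. By colour symmetry, the conditional probability that $v_i$ is $\ell$-loose given $y(v_i)=c$ is the same for every~$c$, so once you condition on the path colours (which makes the off-path subtrees at different levels independent) the product of conditional probabilities does not depend on those colours, and hence equals the unconditional product. Your conditioning argument is therefore not merely a workaround but in fact establishes unconditional independence; either way, the bound $\epsilon^H$ follows.
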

\begin{proof}

From Equation~(\ref{eq:Phi}) and the discussion above we have
$$
\Phi(\calM) \leq \frac92
\sum_{x,y} \pi(x) P(x,y)$$

where the sum is over all colourings $x$ and $y$ for which
$r(T)\in F(x)$ and $r(T)\not\in F(y)$ and $x$ and $y$ differ
on exactly one leaf, $\ell$.
Letting $c=y(\ell)$,
 we can write
$$
\Phi(\calM) \leq \frac92
\sum_{x\in \Omega} 
\sum_{\ell\in L}
\sum_{c\in[q]}
1_{x,\ell,c} \pi(x) \frac{1}{n q},$$
 
where $1_{x,\ell,c}$ is the indicator for the event
that $r(T)\not\in F(y)$ when $y$ denotes the colouring formed from~$x$
by recolouring leaf~$\ell$ with colour~$c$.
Multiplying by the $q$ possibilities for $c$ and
noting that $\pi(X)=1/|\Omega|$, we get

$$\Phi(\calM)
\leq
\frac92 \frac{1}{|\Omega|}
\frac{1}{n q} q
\sum_{x\in\Omega,\ell\in L} 1_{x,\ell},
$$
where  
$1_{x,\ell}$ is the indicator variable for the event
that there is
a colour~$c$ such that, when 
$y$ is  obtained from~$x$ by changing the colour of
leaf~$\ell$ to~$c$, we have $r(T)\not\in F(y)$.
This event implies that every vertex~$v$ on the path from~$\ell$ to~$r(T)$
is $\ell$-loose in~$x$. When $x$ is chosen uniformly a random
these events are independent and by Lemma~\ref{lemPsi} they all
have probability at most~$\varepsilon$. So
$$
\Phi(\calM) \leq
\frac92 \frac{1}{|\Omega|}
\frac{1}{n }
b^H |\Omega| \epsilon^{H-1},
$$
where $b^H$ is the number of $\ell$ in the summation and $|\Omega|$ is the number of~$x$.

\end{proof}

Theorem~\ref{thm:lower} follows from Lemma~\ref{lem:boundPhi} since, by Equation (\ref{eq:cond}),
the lemma implies
$$\tau(\calM,1/(2e)) \geq (1/2-1/(2e))  \frac{2}{9}
\epsilon^{-(H-1)}.$$
    
Also

\begin{align*}
\epsilon^{-(H-1)}
 &= 
{
\left(
\frac{1}{(q-1)\exp(-(b-2)/(q-1))}
\right)
}^{H-1}\\
&= 
e^{(H-1)
\left(
\frac{b-2}{q-1}-\ln(q-1)
\right)}.
 \\
\end{align*}
Using Lemma~\ref{lem:b},
this is at least
$$
e^{(H-1)
\left(
\frac{b-2}{2(q-1)}
\right)}.$$
Using Lemma~\ref{lem:H},
this is at least
$$e^{\frac{\ln(n)}{3\ln(b)}
\left(
\frac{b-2}{2(q-1)}
\right)} =
n^{\frac{b-2}{6(q-1)\ln(b)}},$$
which gives Theorem~\ref{thm:lower}.

\bibliographystyle{plain}

\begin{thebibliography}{10}

\bibitem{MR657512}
David~J. Aldous.
\newblock Some inequalities for reversible {M}arkov chains.
\newblock {\em J. London Math. Soc. (2)}, 25(3):564--576, 1982.

\bibitem{BKMP}
Noam Berger, Claire Kenyon, Elchanan Mossel, and Yuval Peres.
\newblock Glauber dynamics on trees and hyperbolic graphs.
\newblock {\em Probab. Theory Related Fields}, 131(3):311--340, 2005.

\bibitem{BVV}
Nayantara Bhatnagar, Juan Vera, and Eric Vigoda.
\newblock Reconstruction for colorings on trees, arxiv:0711.3664, 2007.

\bibitem{MR1233621}
Persi Diaconis and Laurent Saloff-Coste.
\newblock Comparison theorems for reversible {M}arkov chains.
\newblock {\em Ann. Appl. Probab.}, 3(3):696--730, 1993.

\bibitem{MR1097463}
Persi Diaconis and Daniel Stroock.
\newblock Geometric bounds for eigenvalues of {M}arkov chains.
\newblock {\em Ann. Appl. Probab.}, 1(1):36--61, 1991.

\bibitem{1182463}
Martin Dyer, Abraham~D. Flaxman, Alan~M. Frieze, and Eric Vigoda.
\newblock Randomly coloring sparse random graphs with fewer colors than the
  maximum degree.
\newblock {\em Random Struct. Algorithms}, 29(4):450--465, 2006.

\bibitem{DFJ}
Martin Dyer, Alan Frieze, and Mark Jerrum.
\newblock On counting independent sets in sparse graphs.
\newblock {\em SIAM J. Comput.}, 31(5):1527--1541, 2002.

\bibitem{comparison}
Martin Dyer, Leslie~Ann Goldberg, Mark Jerrum, and Russell Martin.
\newblock Markov chain comparison.
\newblock {\em Probab. Surv.}, 3:89--111 (electronic), 2006.

\bibitem{GMP}
Leslie~Ann Goldberg, Russell Martin, and Mike Paterson.
\newblock Random sampling of 3-colorings in \&zopf;2.
\newblock {\em Random Struct. Algorithms}, 24(3):279--302, 2004.

\bibitem{HVV}
Thomas~P. Hayes, Juan~C. Vera, and Eric Vigoda.
\newblock Randomly coloring planar graphs with fewer colors than the maximum
  degree.
\newblock In {\em STOC}, pages 450--458, 2007.

\bibitem{LRS}
Michael Luby, Dana Randall, and Alistair Sinclair.
\newblock Markov chain algorithms for planar lattice structures.
\newblock {\em SIAM J. Comput.}, 31(1):167--192, 2001.

\bibitem{martinelli97lectures}
F.~Martinelli.
\newblock Lectures on glauber dynamics for discrete spin models, 1997.
\newblock Lecture Notes in Mathematics, vol 1717, 1998 pp 93-191.

\bibitem{MSW}
Fabio Martinelli, Alistair Sinclair, and Dror Weitz.
\newblock Fast mixing for independent sets, colorings, and other models on
  trees.
\newblock {\em Random Struct. Algorithms}, 31(2):134--172, 2007.

\bibitem{mosselsurvey}
E.~Mossel.
\newblock {\em Survey: Information flow on trees.}, volume~63 of {\em DIMACS
  Ser. Discrete Math. Theoret. Comput. Sci.}, pages 155--170.
\newblock AMS Press, 2004.

\bibitem{mossel}
Elchanan Mossel.
\newblock Recursive reconstruction on periodic trees.
\newblock {\em Random Struct. Algorithms}, 13(1):81--97, 1998.

\bibitem{MR1757972}
Dana Randall and Prasad Tetali.
\newblock Analyzing {G}lauber dynamics by comparison of {M}arkov chains.
\newblock {\em J. Math. Phys.}, 41(3):1598--1615, 2000.
\newblock Probabilistic techniques in equilibrium and nonequilibrium
  statistical physics.

\bibitem{MR1211324}
Alistair Sinclair.
\newblock Improved bounds for mixing rates of {M}arkov chains and
  multicommodity flow.
\newblock {\em Combin. Probab. Comput.}, 1(4):351--370, 1992.

\bibitem{MR1003059}
Alistair Sinclair and Mark Jerrum.
\newblock Approximate counting, uniform generation and rapidly mixing {M}arkov
  chains.
\newblock {\em Inform. and Comput.}, 82(1):93--133, 1989.

\bibitem{1048642}
Dror Weitz.
\newblock {\em Mixing in time and space for discrete spin systems}.
\newblock PhD thesis, U.C. Berkeley, 2004.

\end{thebibliography}

\end{document}